\documentclass[letterpaper]{article}
\usepackage[utf8]{inputenc}

\usepackage{doi}
\usepackage{xspace}
\usepackage{graphicx}
\usepackage{amsfonts}
\usepackage{amssymb}
\usepackage{amsthm} 

\newtheorem{theorem}{Theorem} 
\newtheorem{observation}{Observation}
\newtheorem{definition}{Definition}
\newtheorem{lemma}{Lemma}

\begin{document}

\title{Online Stacking with a Few Load/Unload Points}

\author{Martin Olsen\\Department of Business Development and Technology\\Aarhus University\\Denmark\\martino@btech.au.dk\\ }

\maketitle

\begin{abstract}
We consider the stacking problem where items are temporarily stored in stacks in a stacking area. The stacks are working as Last In First Out (LIFO) data structures. The objective is to avoid {\em shifts} or {\em restows} that occur when an item has to leave the stacking area earlier than an item above it. We present a simple online algorithm for the problem where we pick a stack for an incoming item without any information on future items. We present a sufficient condition for the algorithm to avoid shifts expressed as an inequality involving the dimension of the stacking area, the number of load/unload points and the maximum number of items present at the same time. The condition is relatively tight in the sense that we can find instances requiring shifts for {\em any} algorithm (including offline algorithms) with small modifications of the condition. Our results indicate that our algorithm is useful if the number of load/unload points is relatively small compared to the number of items.
\end{abstract}

\section{Introduction}

Stacking is an important operation within logistics. The items that are stacked can be containers in a container terminal or on a container ship~\cite{vanTwiller2024} or the items could be steel plates~\cite{Rei13}. A stack can be any type of Last In First Out (LIFO) data structure so a stack does not necessarily need to work in the conventional vertical manner. A train track where trains only enter and leave in one end can thus also be seen as a stack~\cite{Demange2012} and a lane on the deck of a roll-on/roll-off ship is another example on a horizontal stack.

We consider a stacking area with $s$ stacks with capacity $h$. The capacity of a stack is the upper limit on the number of items that can be stored in the stack at any given point in time. A load/unload point is a point in time or space (for example a port) where items can enter and leave the stacking area. We focus on the {\em discrete} case of the stacking problem with only a few possible load/unload points as opposed to the {\em continuous} case where any real number can be a load/unload point.  We restrict our attention to homogeneous items where any item can be stacked on top of another item. The objective for the stacking problem is to minimize the number of {\em shifts} where a shift refers to a situation where you have to {\em restow} an item $a$ to get access to another item $b$ in the same stack as $a$ because $b$ has to leave the stacking area before $a$.

For some stacking scenarios, you do not have full information on what items you have to deal with in the future. In such a case you are looking for {\em online} algorithms that take a decision on where to stack an item when the item arrives only using information on items already present in the stacking area. An algorithm where all information on future items is available is called an {\em offline} algorithm.

\subsection{Related Work}

The offline case of the stacking problem is NP-hard for any fixed value of $h \geq 6$~\cite{Cornelsen2007,Jansen2003}, but the complexity of the problem for fixed values $2 \leq h \leq 5$ is an open problem (to the best of our knowledge). The offline version of the problem is also NP-hard for unbounded stack capacity~\cite{Avriel2000}. The offline problem can be solved in polynomial time if both $s$ and $h$ are fixed constants~\cite{Tierney2014} but the running time is enormous.

Olsen et al.~\cite{Olsen2023} present an online stacking algorithm that performs well asymptotically for the continuous case under certain assumptions for the distribution on the intervals for the items. The competitive ratio for online stacking has been examined by Demange et al.~\cite{Demange2012} in a train track scenario and subsequently by Demange and Olsen~\cite{DBLP:conf/walcom/DemangeO18}. It is worth mentioning the work of Rei and Pedroso~\cite{Rei13}
and K\"onig et al.~\cite{Konig2007}
on stacking problems in the context of the steel industry. Finally, it is also worth mentioning the stacking heuristics presented in~\cite{Borgman2010,Duinkerken2001,Wang2014}.

\subsection{Contribution}

Let $p$ denote the number of load/unload points and let $m$ denote the maximum number of items present at the same time in the stacking area. The capacity of the stacking area $sh$ should be at least $m$ and you can avoid shifts if you add some extra capacity in addition to $m$ to the stacking area. As an example, you can put items with the same destination in the same stack in which case you will avoid shifts using extra capacity not exceeding $ph$ (roughly). The stacking problem is computationally hard so it is not easy to figure out exactly how much extra capacity is needed to avoid shifts. 

The key focus of our paper is to present a simple online algorithm that avoids shifts using only a little extra space in the stacking area in addition to $m$. For a real number $x$, we let $\lfloor x \rfloor$ denote the largest integer $z$ such that $z \leq x$ and let $\lceil x \rceil$ denote the smallest integer $z$ with $x \leq z$. The contribution of our paper can be summarized as follows:
\begin{itemize}
\item We present a simple online stacking algorithm and a sufficient condition for producing a stacking with no shifts using the algorithm:
$$ m + (h-1)\left\lceil p/2 \right\rceil \leq sh \enspace . $$
The condition says that the algorithm avoid shifts if we use ${(h-1)\left\lceil p/2 \right\rceil}$ extra space in addition to $m$ in the stacking area. Please note that less than $p/2$ extra stacking positions are used in addition to the necessary $m/h$ positions if $p$ is even (can be seen by division with $h$).
\item If we modify the condition slightly and lower the stacking area capacity as follows
$$ m + (h-1)\left(\left\lfloor p/2 \right\rfloor-1\right) - 1 = sh \enspace ,$$
then there are stacking problem instances that satisfy the condition where shifts are unavoidable no matter what algorithm is used (even offline algorithms). This shows that our condition is relatively tight.
\item A key ingredient for our algorithm is a procedure for splitting the items into $\left\lceil p/2 \right\rceil$ groups such that items within the same group can be put in the same stack without causing shifts. This procedure might also be useful for other stacking approaches.
\end{itemize}

The preliminaries are covered in Sec.~\ref{sec:preliminaries} and the algorithm and main results of the paper are presented Sec.~\ref{sec:algorithm} followed by a conclusion.

\section{Preliminaries}\label{sec:preliminaries}

We will say that two intervals $[p_1, p_2]$ and $[q_1, q_2]$ {\em overlap} if $p_1 < q_1 < p_2 < q_2$ or $q_1 < p_1 < q_2 < p_2$. For each item that has to be stored in the stacking area there is an interval $[p_1, p_2]$ where $p_1$ and $p_2$ denote the loading and unloading point for the item, respectively. A key observation is that a shift is necessary if and only if there are two items with overlapping intervals assigned to the same stack. This is under the assumption that items loaded at the same point will be loaded such that items traveling for the longest distance will be loaded first and that items will be unloaded before new items are loaded. This natural order of loading and unloading will be assumed throughout the paper.

Please note that there might be several items represented by the same interval. Please also note that $p_1$ and $p_2$ can be points in space if the stacking area as an example is a container ship, but $p_1$ and $p_2$ can also be points in time if the stacking area as an example is a container terminal. The number of possible loading and unloading points is denoted by $p$: $p_1, p_2 \in \{0, 1, \ldots, p-1\}$.

An instance of a stacking problem is represented by the multiset of intervals for the items. The objective is to come up with an assignment of the items to the stacks in the stacking area such that 1) the capacity of the stacks is respected, and 2) the number of shifts is minimized. The definition of the problem looks as follows:
\begin{definition}
\label{def:stacking} 
  The stacking problem is defined as follows:
  \begin{itemize}
  \item 
    Input: Two positive integers $s$ and $h$ and a multiset $\mathcal{I}$ of intervals where each interval $[p_1, p_2] \in \mathcal{I}$ satisfies $p_1, p_2 \in \{0, \ldots, p-1\}$ and $p_1 \neq p_2$. The number $s$ denotes the number of stacks for a stacking area and $h$ denotes the capacity for each stack. Each interval in $\mathcal{I}$ represents an item loaded at $p_1$ and unloaded at $p_2$. 
  \item 
    Output: An assignment of the items to stacks with a minimum number of shifts.
  \end{itemize} 
\end{definition}

There is a connection between the stacking problem and graph coloring that we now will explain. Consider the graph $G_p$ with a vertex for each possible interval $[p_1, p_2]$ with $p_1, p_2 \in \{0, \ldots, p-1\}$ and $p_1 \neq p_2$ and an edge between two vertices if and only if the two corresponding intervals overlap. A subgraph of $G_7$ is depicted below in Fig.~\ref{fig:G7} (the symbols in the figure will be explained later). The graph $G_p$ and any of its subgraphs is called an {\em interval overlap graph}. A coloring -- or more precisely a proper vertex coloring -- of a graph is an assignment of colors to the vertices such that no two vertices of the same color has an edge between them. So if two vertices have the same color, then their intervals are not overlapping. In the following the color of an item will refer to the color of the interval of the item.

\section{An Online Stacking Algorithm}\label{sec:algorithm}

Now assume that we have a coloring of $G_p$ and an instance of the stacking problem. We can avoid shifts if we only assign items to the same stack if they have the same color in $G_p$. Intuitively, we obtain a good solution if only a few colors are used in the coloring. These considerations lead to the following simple algorithm for assigning an arriving item to a stack where a {\em partially filled} stack is a stack containing at least one item but strictly less than $h$ items: if there is a partially filled stack with items with the same color as the arriving item, then put the arriving item in such a stack. Otherwise, start a new stack with the arriving item at the bottom. As an example, the items could be colored according to their destination (${\mbox{color}(p_1, p_2, p) = p_2}$) using $p-1$ colors but as we will see later it is possible to use fewer colors.

The algorithm is listed in Fig.~\ref{fig:alg_stack} where the color-function returns the color of the interval $[p_1, p_2]$ in $G_p$, and where we assume that the items are presented to the algorithm in the natural order described in Sec.~\ref{sec:preliminaries}.

\begin{figure}
  \begin{center}
    \mbox{\begin{minipage}[t]{0cm}
      \begin{tabbing}
        x\=xxx\=xxx\=xxx\=xxx\=xxx\=\kill
        \textbf{Algorithm stack($p_1$, $p_2$, $p$)}: \\
        \> \ 1:\>  $c \leftarrow$ color($p_1$, $p_2$, $p$) \\
        \> \ 2:\>  \\
        \> \ 3:\>  {\bf if} there is a partially filled stack with items with color $c$ {\bf then} \\
        \> \ 4:\>\>  put the item on top of such a stack \\
        \> \ 5:\>  {\bf else} \\
        \> \ 6:\>\>  start a new stack with the item at the bottom \\
        \> \ 7:\>  {\bf end if} \\
     \end{tabbing}
   \end{minipage}}
  \end{center}
  \caption{The pseudocode for our online stacking algorithm.}
  \label{fig:alg_stack}
\end{figure}

The strategy for the stacking algorithm is similar to the strategy used for the continuous case in~~\cite{Olsen2023} coauthored by the author of this paper but the color-functions are completely different. The three key questions that are addressed in the remaining part of the paper are:
\begin{enumerate}
\item How many colors do we need to color $G_p$?
\item Is there a simple procedure to color $G_p$ with a minimum number of colors?
\item How much extra space do we need in the stacking area in addition to $m$ to guarantee no shifts using the stacking algorithm?
\end{enumerate}
First, we will work on question 1 and 2. In order to do that, we need to introduce some graph terminology.

The {\em chromatic number} of a graph is the minimum number of colors needed to color it, and a {\em clique} in a graph is a set of vertices where every pair of vertices is connected by an edge. The maximum size of a clique in a graph is called the {\em clique number} of the graph and the clique number is easily seen to be a lower bound for the chromatic number. In general, the chromatic number can be much bigger than the clique number which is also the case for interval overlap graphs: Davies~\cite{DaviesJames2022Ibfc} has shown that there are interval overlap graphs with clique number at most $\omega$ and chromatic number $\omega(\ln \omega-2)$ for any $\omega$. Wet let $\chi_p$ denote the chromatic number of $G_p$.

The following lemma provides answers for questions 1 and 2:

\begin{lemma}
\label{lem:coloring}
The chromatic number $\chi_p$ for $G_p$ is $\left\lceil p/2 \right\rceil$, $p \neq 3$, and the coloring algorithm in Fig.~\ref{fig:alg_color} produces a coloring using $\chi_p$  colors.
\end{lemma}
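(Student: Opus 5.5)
The plan is to prove the two inequalities $\chi_p \le \lceil p/2\rceil$ and $\chi_p \ge \lceil p/2\rceil$ separately. Throughout I use the normalisation $p_1<p_2$ for every vertex $[p_1,p_2]$ of $G_p$. The first step is a reformulation. Place the points $0,\ldots,p-1$ in convex position, say on a circle in this order, and draw each vertex $[a,b]$ as the chord $ab$. Two intervals overlap in the sense of Sec.~\ref{sec:preliminaries} exactly when the chords cross in their interiors, because chords sharing an endpoint, nested intervals and disjoint intervals all give non-crossing chords. So $G_p$ is the crossing graph of the complete convex geometric graph on $p$ points, and a colour class is a set of pairwise non-crossing chords, i.e.\ a page of a book embedding of $K_p$ with the natural vertex order.

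For the upper bound I would check that the algorithm in Fig.~\ref{fig:alg_color} is (or can be adapted to) the classical zigzag construction for the book thickness of $K_p$ (Bernhart--Kainen). For even $p=2q$, for each $k=0,\ldots,q-1$ take the zigzag Hamiltonian path on the circle starting at $k$:
\[
k,\; k+1,\; k-1,\; k+2,\; k-2,\;\ldots \pmod{p}.
\]
These are $q$ pairwise edge-disjoint non-crossing paths covering all $q(2q-1)$ chords. Each path has $2q-1$ edges, which matches the edge count.

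\begin{itemize}
\item \textbf{Check 1 (non-crossing).} Each path is non-crossing because it consists of parallel ``rungs'' between two arcs of the circle.
\item \textbf{Check 2 (partition).} The paths partition the chords because a chord $\{x,y\}$ lies on the path indexed by $k \equiv (x+y)/2$ or $(x+y\pm1)/2$ modulo $q$, which is uniquely determined.
\item \textbf{Odd $p$.} Delete point $p$ from the construction for $p+1$, which leaves $\lceil p/2\rceil$ colours.
\end{itemize}

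The concrete task is to translate ``the colour of $[a,b]$ is determined by $(a+b)$ modulo $q$, with a parity correction'' into the colour function of Fig.~\ref{fig:alg_color} and verify the two checks directly on intervals. For Check 1, two chords of the same colour either share an endpoint or are nested or disjoint, and this can be verified by a short case analysis on $a+b$.

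For the lower bound with $p=2q$ even, the intervals $[i,i+q]$ for $i=0,\ldots,q-1$ form a clique: for $i<j$ we have $i<j<i+q<j+q$. Hence $\chi_p\ge q$. For odd $p=2q+1$ a clique cannot suffice, since pairwise overlapping intervals have pairwise distinct endpoints, so cliques have size at most $q$. Instead I would use the set $S$ of all intervals of length $q$ or $q+1$. There are $(q+1)+q=2q+1$ of them, and the argument shows $\alpha(S)\le 2$.

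\begin{itemize}
\item \textbf{Non-overlapping pairs share an endpoint.} Two members of $S$ cannot be disjoint, since together they would need at least $2q+2$ points. They can be nested only when they share an endpoint, since their lengths differ by at most one.
\item \textbf{No three are pairwise non-overlapping.} Three such intervals would pairwise share endpoints, so they would form a triangle $[a,b],[b,c],[a,c]$. Then $c-a\ge 2q>q+1$ whenever $q\ge2$, which is impossible.
\end{itemize}

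So each colour class contains at most two intervals of $S$, and $\chi_p\ge\lceil (2q+1)/2\rceil=q+1$. For $p=5$ the set $S$ is exactly the $5$-cycle $[0,2],[1,3],[2,4],[0,3],[1,4]$. For $q=1$, i.e.\ $p=3$, the argument breaks, which explains the exception: $G_3$ has no edges.

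I expect the main obstacle to be the upper bound. The lower bounds are short once $S$ is found. For the upper bound, the work is matching the paper's coloring algorithm to the zigzag partition and writing a clean case analysis showing that two same-coloured intervals never overlap. If the figure's rule differs, I would first compute its colour classes for small $p$, such as $p=6,7$, and look for the zigzag pages among them.
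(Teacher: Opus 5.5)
Your proposal does not prove the second half of the statement, namely that the particular algorithm in Fig.~\ref{fig:alg_color} is a proper colouring with $\lceil p/2\rceil$ colours, and the route you sketch for it would fail. The zigzag (Bernhart--Kainen) pages do give $\chi_p\le\lceil p/2\rceil$, but they are not the colour classes of that algorithm. The algorithm colours a short interval ($p_2-p_1\le\lceil p/2\rceil$) by $p_2$ reduced modulo $\lceil p/2\rceil$, and a long interval by its left endpoint $p_1$. So colour class $c$ is a double star centred at $c$ and at $c+\lceil p/2\rceil$. For $p=6$, class $0$ is $[0,3],[1,3],[2,3],[0,4],[0,5]$, where points $0$ and $3$ both have degree $3$, so it is not a path. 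Searching the classes for zigzag pages, or running a case analysis on $a+b$, therefore leads nowhere.

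What you need is a direct check of the rule itself, which is what the paper does:
\begin{itemize}
\item Lines 3, 5 and 8 return values in $\{0,\ldots,\lfloor p/2\rfloor-1\}$, $\{1,\ldots,\lceil p/2\rceil-1\}$ and $\{0,\ldots,\lfloor p/2\rfloor-2\}$ respectively, so at most $\lceil p/2\rceil$ colours occur.
\item Equal colours coming from the same line, or from lines 5 and 8, force a shared endpoint.
\item From lines 3 and 8 you get $q_1=p_2-\lceil p/2\rceil\le p_1$ and $q_2>q_1+\lceil p/2\rceil=p_2$, so the intervals do not overlap.
\item From lines 3 and 5 you get $q_2=p_2-\lceil p/2\rceil\le p_1$, so again they do not overlap.
\end{itemize}
Without this check, your argument establishes the value of $\chi_p$ but not the claim about the algorithm.

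Your lower bounds are correct up to one omission, and the odd case is a genuine variant of the paper's argument. The paper uses exactly your set $S$: all $q+1$ intervals of length $q$, together with the clique $C_p$ of the $q$ intervals of length $q+1$. It argues by contradiction. With $q$ colours, $[q,2q]$ is forced to take the colour of $[0,q]$, and every member of $C_p$ overlaps one of these two. Your counting bound ($\alpha(S)\le 2$ with $|S|=2q+1$) is cleaner.

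The omission is this: three pairwise non-overlapping members of $S$ need not form a triangle $[a,b],[b,c],[a,c]$; they could instead all share one endpoint $x$. You must rule this out as well. Each side of $x$ carries at most two such intervals, of lengths $q$ and $q+1$. So two of the three lie on one side and one of those has length $q+1$. The third lies on the other side and has length at least $q$. Together they span a length of at least $2q+1>p-1$, which is impossible.
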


\begin{figure}
  \begin{center}
    \mbox{\begin{minipage}[t]{0cm}
      \begin{tabbing}
        x\=xxx\=xxx\=xxx\=xxx\=xxx\=\kill
        \textbf{Algorithm color($p_1$, $p_2$, $p$)}: \\
        \> \ 1:\>  {\bf if} $p_2 - p_1 \leq \left\lceil p/2 \right\rceil$ {\bf then}\\
        \> \ 2:\>\>  {\bf if} $p_2 \geq \left\lceil p/2 \right\rceil$ {\bf then}\\
        \> \ 3:\>\>\>  {\bf return} $p_2-\left\lceil p/2 \right\rceil$ \\
        \> \ 4:\>\>  {\bf else} \\
        \> \ 5:\>\>\>  {\bf return} $p_2$ \\
        \> \ 6:\>\>  {\bf end if} \\
        \> \ 7:\>  {\bf else} \\
        \> \ 8:\>\> {\bf return} $p_1$ \\
        \> \ 9:\>  {\bf end if} \\
     \end{tabbing}
   \end{minipage}}
  \end{center}
  \caption{This algorithm produces a coloring of $G_p$ using $\chi_p = \left\lceil p/2 \right\rceil$ colors.}
  \label{fig:alg_color}
\end{figure}

\begin{proof}
The following $\left\lfloor p/2 \right\rfloor$ intervals with length $\left\lceil p/2 \right\rceil$ form a clique in $G_p$ that we will refer to as $C_p$: $\left[0, \left\lceil p/2 \right\rceil\right]$, $\left[1,\left\lceil p/2 \right\rceil+1\right]$,  $\left[2, \left\lceil p/2 \right\rceil+2\right]$, $\ldots$,  $\left[\left\lfloor p/2 \right\rfloor-1, p - 1 \right]$. As an example, $C_7$ is a triangle as depicted in Fig.~\ref{fig:G7}. There are not bigger cliques in $G_p$ than $C_p$ since all the endpoints for intervals in a clique must be different. We thus conclude that the clique number of $G_p$ is $\left\lfloor p/2 \right\rfloor$ implying $\chi_p \geq \left\lfloor p/2 \right\rfloor$.

We now consider the algorithm color listed in Fig.~\ref{fig:alg_color}. It is not hard to see that the algorithm uses $\left\lceil p/2 \right\rceil$ colors: The set of numbers returned in line 3 is $\{0, 1, \ldots, p-1-\left\lceil p/2 \right\rceil\} = \{0, 1, \ldots, \left\lfloor p/2 \right\rfloor-1\} $, the set of numbers returned in line 5 is $\{1, 2, \ldots, \left\lceil p/2 \right\rceil-1\}$, and the set of numbers returned in line 8 is $\{0, 1, \ldots, \left\lfloor p/2 \right\rfloor-2\}$. The maximum number returned in line 8 is the color of the interval $[\left\lfloor p/2 \right\rfloor-2, p-1]$.

In order to prove that the coloring is proper, we need to prove that two intervals with the same color do not overlap. If identical colors of two intervals are returned in the same line in the algorithm, then the intervals share an endpoint in which case they are not overlapping. This is also the case if the colors are returned in lines 5 and 8.

Now assume that $[p_1, p_2]$ is colored in line 3 and $[q_1, q_2]$ is colored in line 8 with the same color implying
$$p_2 - p_1 \leq \left\lceil p/2 \right\rceil$$
$$q_2 - q_1 > \left\lceil p/2 \right\rceil$$
$$q_1 = p_2 - \left\lceil p/2 \right\rceil \enspace .$$
From the first and the third equation above we can see that $p_1 \geq q_1$. From the second and the third equation we deduce that $p_2 < q_2$. We now conclude that the intervals are not overlapping.

We now examine the case where $[p_1, p_2]$ and $[q_1, q_2]$ receive the same color in line 3 and 5, respectively. In this case we have the following
$$p_2 - p_1 \leq \left\lceil p/2 \right\rceil$$
$$q_2 = p_2 - \left\lceil p/2 \right\rceil \enspace$$
implying $p_1 \geq q_2$ which means that the intervals are not overlapping. We have now examined every case for returning identical colors and we conclude that the coloring is proper implying $\chi_p \leq \left\lceil p/2 \right\rceil$.

At this point in the proof, we know that $\left\lfloor p/2 \right\rfloor \leq \chi_p \leq \left\lceil p/2 \right\rceil$. We will use a proof by contradiction to prove that $\chi_p > (p-1)/2$ if $p$ is odd to show that $\chi_p = \left\lceil p/2 \right\rceil$. Now assume that $p$ is odd and that we have a coloring of $G_p$ using $(p-1)/2$ colors.

The following $(p-1)/2$ intervals with length $(p-1)/2$ form a clique in $G_p$: $\left[0, (p-1)/2\right]$, $\left[1, (p-1)/2+1\right]$, $\left[2, (p-1)/2+2\right]$, $\ldots$,  $\left[(p-3)/2, p-2\right]$. These intervals are colored with $(p-1)/2$ different colors. Now consider the interval $a = \left[(p-1)/2, p-1\right]$ that also has length $(p-1)/2$. This interval overlaps all the intervals in the clique except the interval $b = \left[0, (p-1)/2\right]$. This means that $a$ and $b$ must have the same color.

Any interval in the clique $C_p$ defined above will overlap at least one of the intervals $a$ or $b$, so the color used for $a$ and $b$ cannot appear in $C_p$ contradicting that only $(p-1)/2$ are used to color $G_p$. The stage in the proof where we have to color $C_p$ and obtain a contradiction is illustrated in Fig.~\ref{fig:G7} for $p=7$.
\end{proof}

\begin{figure}
\centering
\includegraphics[scale=1.2]{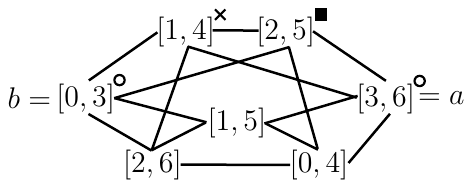}
\caption{A subgraph of $G_7$ is shown. Four of the vertices have received a color represented by a circle, cross or a square. It is not possible to color the remaining vertices in $C_7$ using only the three colors used so far (the clique $C_7$ is the triangle in the bottom and middle).}
\label{fig:G7}
\end{figure}
As a remark to the proof, a contradiction will not appear for $p=3$ consistent with the fact that $\chi_3 = 1$.
 
We can now answer question 3 from above by presenting a sufficient condition for avoiding shifts addressing how much extra space is needed in the stacking area:
\begin{theorem}
\label{thm:algorithm} The stacking algorithm in Fig.~\ref{fig:alg_stack} avoids shifts if it uses the color-function from Fig.~\ref{fig:alg_color} and
\begin{equation}
\label{eq:sufficient_condition}
m + (h-1)\left\lceil p/2 \right\rceil \leq sh \enspace .
\end{equation}
\end{theorem}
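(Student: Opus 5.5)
Two things must be shown: no stack ever holds two overlapping items, and the algorithm never runs out of stacks. The first is immediate. The algorithm only places an item on a stack whose items have the same color. By Lemma~\ref{lem:coloring} the color-function of Fig.~\ref{fig:alg_color} is a proper coloring of $G_p$, so any two items sharing a stack have non-overlapping intervals. By the key observation in Sec.~\ref{sec:preliminaries}, together with the natural loading/unloading order, no shifts occur. The whole difficulty is therefore to show that, whenever line~6 is executed, an empty stack exists.

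For the capacity argument, fix a moment when an item $x$ of color $c$ arrives and line~6 is executed. Just before $x$ is placed, classify the nonempty stacks as \emph{full} (exactly $h$ items) or partially filled. The invariant to prove is that at every moment there is at most one partially filled stack of each color, so at most $\chi_p = \left\lceil p/2 \right\rceil$ partially filled stacks in total. New stacks are only started when no partially filled stack of that color exists, so starting one keeps the invariant. Unloading items only decreases stack heights; a full stack may become partially filled, which could create a second partially filled stack of the same color. This is the main obstacle. If the invariant can fail, I would instead bound the number of non-full stacks directly: after unloading at a point, the remaining items of a stack all have later destinations. I would first try to show that the invariant in the slightly weaker form needed still suffices, or replace it by counting ``slack'' per color. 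Concretely, I would aim for the bound: the total number of empty positions in nonempty stacks is at most $(h-1)\left\lceil p/2 \right\rceil$, where only stacks created since the last time the color had no partial stack are counted.

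Assuming the invariant holds (or is repaired), the count is simple. At the moment $x$ arrives, the number of items present, including $x$, is at most $m$. The color $c$ has no partially filled stack. Every other partially filled stack holds at least one item, and every full stack holds $h$. Let $f$ be the number of full stacks and $q \le \left\lceil p/2 \right\rceil - 1$ the number of partially filled ones (color $c$ excluded). The occupied stacks then contain at least $fh + q$ items, and $fh+q+1 \le m$ counting $x$. The number of stacks in use is $f+q$. Hence
\[
(f+q)h \le m - 1 - q + qh = m-1+q(h-1) \le m - 1 + (h-1)\left(\left\lceil p/2 \right\rceil-1\right) < sh
\]
by \eqref{eq:sufficient_condition}. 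So $f+q<s$, and an empty stack is available for $x$.

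Handling the unloading issue cleanly is where I expect the real work. My first attempt would be to note that items are unloaded from the top in LIFO order, and to argue that the bound $fh+q$ on items may be replaced by ``each non-full stack holds at least one item.'' I would then check whether two partially filled stacks of the same color can be merged conceptually in the count.
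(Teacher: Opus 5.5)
There is a genuine gap. Your capacity count needs the bound $q\le\lceil p/2\rceil-1$, that is, at every moment each color has at most one partially filled stack, and this is exactly the part you leave open. You correctly notice that unloading could in principle turn a full stack into a second partially filled stack of the same color, but you then proceed ``assuming the invariant holds (or is repaired)''. The fallbacks you mention do not rescue the argument. ``Each non-full stack holds at least one item'' gives nothing unless the number of non-full stacks is bounded, and the per-color slack bound is stated but not proved. Your computation $fh+q+1\le m \Rightarrow (f+q)h\le m-1+q(h-1)\le (s-1)h$ is correct, and is a cleaner form of the paper's informal worst-case count. So the whole proof hinges on the missing step.

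The missing idea is that each color class behaves like a single stack. Items of color $c$ present at the same time have non-overlapping intervals, because the coloring is proper. Since unloading precedes loading, two simultaneously present intervals cannot be disjoint or merely touch at an endpoint, so they are nested. By the natural loading order (longest trip first at a common loading point), the later arrival is always the inner interval. Hence items of color $c$ leave in reverse order of arrival.

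The paper turns this into a strengthened invariant. The items of color $c$ can be listed as $i_1,\dots,i_k$ so that each interval contains the next, the items of a stack are consecutive, and the at most one partially filled stack of color $c$ comes at the end of the list. A load appends to the end and an unload removes $i_k$. In particular, the partially filled stack of color $c$ holds the most recent arrivals of that color, so it is completely emptied before any full stack of color $c$ loses an item. The full stack that then loses its top item becomes the unique partially filled stack of that color. Proving this invariant by induction over load and unload events is what you need; with it, your count finishes the proof.
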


\begin{proof}
Now imagine that the stacking area is flexible with access to an infinite number of stacking positions. The strategy of the proof is to specify an upper bound on the amount of space used by the stacking algorithm in this flexible setting. In other words, we try to identify a worst case scenario. If we have access to the amount of space given by the upper bound, then we can avoid shifts.

We claim that the following invariant holds for any color $c$ at any point in time $t$: It is possible to arrange the items with color $c$ present in the stacking area at time $t$ in a sequence $i_1$, $i_2$, $\ldots$, $i_k$ such that: 
\begin{itemize}
\item The interval for item $i_x$ contains the interval of $i_{x+1}$, $1 \leq x < k$ 
\item Items stored in the same stack occupy consecutive positions in the sequence
\item There is at most one partially filled stack with items with color $c$ at time $t$, and if such a stack exists, then the items in it will have the highest indices in the sequence
\end{itemize}
The invariant is proved using induction. The invariant clearly is true before the stacking process starts. If we unload an item, then it will be the final item $i_k$ in the sequence for the particular color. If we load an item, then it can be added to the end of the sequence of the particular color because all the other items have the same color as the arriving item. It is not hard to check that all three statements in the invariant remain true whenever we load or unload an item.

Based on the invariant, we can now describe the worst case scenario: The maximum amount of space is used if there is a point in time with $m$ items in the stacking area and a partially filled stack containing only one item for each of the possible $\chi_p = \left\lceil p/2 \right\rceil$ colors as illustrated in Fig.~\ref{fig:sufficient_condition}. The inequality (\ref{eq:sufficient_condition}) thus ensures that we have enough capacity. 
\end{proof}

\begin{figure}
\centering
\includegraphics[scale=1.2]{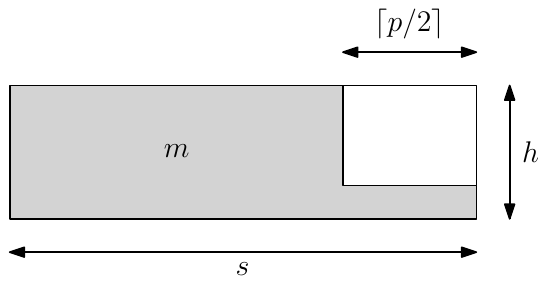}
\caption{The worst case scenario where a maximum amount of space is used in the stacking area when running the stacking algorithm.}
\label{fig:sufficient_condition}
\end{figure}

It should be noted that the color-algorithm used in the stacking algorithm could be tailored to color the {\em subgraph} of $G_p$ corresponding to the intervals in the actual instance. The factor $\left\lceil p/2 \right\rceil$ in (\ref{eq:sufficient_condition}) could then be replaced by the chromatic number of this subgraph that might be smaller. For the offline case, it could be interesting to try different color-algorithms in the stacking algorithm and choose the best of the different stacking plans produced. Please also note that there (of course) is a free choice for picking stacking positions for new stacks in the stacking algorithm. This means that constraints for stowing the items potentially can be handled by extending the algorithm with some logic for picking positions for new stacks. 

The sufficient condition from Theorem~\ref{thm:algorithm} for avoiding shifts is quite tight since we can produce stacking problem instances that use only a little less space than expressed by the condition and require shifts for {\em any} algorithm. Such instances can be constructed as follows: For each interval in the clique $C_p$ defined in the proof of Lemma~\ref{lem:coloring}, we add $1+z \cdot h$ items to the instance for some integer $z \geq 0$. We choose the $z$-values so we cannot avoid putting two items with different intervals in the same stack forcing a shift. This construction is depicted in Fig.~\ref{fig:observation}. The tightness of the sufficient condition can be expressed as follows:
\begin{observation}
\label{observation}
Stacking instances where shifts are unavoidable exist for any combination of $s, h, p, m$ satisfying
$$ m + (h-1)\left(\left\lfloor p/2 \right\rfloor-1\right) - 1 = sh \enspace .$$
\end{observation}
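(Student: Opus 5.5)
We construct an explicit instance from the clique $C_p$ of Lemma~\ref{lem:coloring}. Let $k=\lfloor p/2\rfloor$ and let $I_0,\dots,I_{k-1}$ be the intervals of $C_p$, where $I_j=[j,\lceil p/2\rceil+j]$. All loading points $0,\dots,k-1$ come strictly before all unloading points $\lceil p/2\rceil,\dots,p-1$. Hence every item of the instance is present at time $\lceil p/2\rceil$ (just before the first unload), and the maximum number of items present simultaneously equals the total number of items. For integers $z_0,\dots,z_{k-1}\ge 0$ we give interval $I_j$ exactly $1+z_jh$ items, so that
$$m=\sum_{j}(1+z_jh)=k+h\sum_j z_j .$$

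The key step is a counting argument. Since any two intervals of $C_p$ overlap, a shift-free assignment puts items of different intervals in different stacks. Interval $I_j$ has $1+z_jh$ items, so it needs at least $\lceil (1+z_jh)/h\rceil=z_j+1$ stacks (for $h\ge 2$; the case $h=1$ is handled separately below). Thus a shift-free solution requires
$$s\ \ge\ \sum_j (z_j+1)=k+\sum_j z_j$$
stacks, and if $s<k+\sum_j z_j$ shifts are unavoidable for every algorithm, online or offline.

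It remains to match the equation in the statement. With $Z=\sum_j z_j$ we have $m=k+hZ$, and the hypothesis reads
$$sh=m+(h-1)(k-1)-1=k+hZ+(h-1)(k-1)-1=h(Z+k-1)+0 ,$$
since $k+(h-1)(k-1)-1=h(k-1)$. Therefore $s=Z+k-1<Z+k$, which gives exactly the contradiction needed.

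For the converse direction, given $s,h,p,m$ satisfying the equation, we must produce such $z_j$. The equation forces $m-k=h\bigl(s-(k-1)\bigr)$, so $m-k$ is a nonnegative multiple of $h$ whenever $s\ge k-1$. We then set $Z=(m-k)/h$ and distribute it arbitrarily, for instance $z_0=Z$ and $z_j=0$ for $j\ge 1$.

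The main obstacle is the bookkeeping of degenerate cases. When $h=1$, the equation gives $s=m-1$, and with only $s<m$ unit stacks the instance is simply infeasible, so we interpret that as unavoidable trouble or exclude it. We also need $k\ge 2$ so that the items genuinely conflict. Finally, the instance must respect the capacity $sh\ge m$ implicitly; this holds when $(h-1)(k-1)\ge 1$, so $p\ge 4$ and $h\ge 2$ is the regime in which the argument applies. We will state these mild side conditions explicitly and note that the count $m$ is attained at time $\lceil p/2\rceil$ because all loads precede all unloads.
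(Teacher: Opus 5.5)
Your proposal is correct and uses the paper's construction: put $1+z_jh$ items on each interval of the clique $C_p$, so a shift-free assignment would need $\sum_j(z_j+1)=Z+k=s+1$ stacks, one more than the $s$ available; you also write out the arithmetic and the degenerate cases more carefully than the paper. Your final list of side conditions should also include $s\ge k-1$ (equivalently $m\ge k$), which you use but do not list. It is genuinely needed: for $h\ge 2$ and $s\le k-2$ the equation gives $s-m=(h-1)(k-1-s)-1\ge 0$, so placing one item per stack avoids shifts, and the observation as literally stated fails there (for instance $p=10$, $h=s=2$, $m=1$).
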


\begin{figure}
\centering
\includegraphics[scale=1.2]{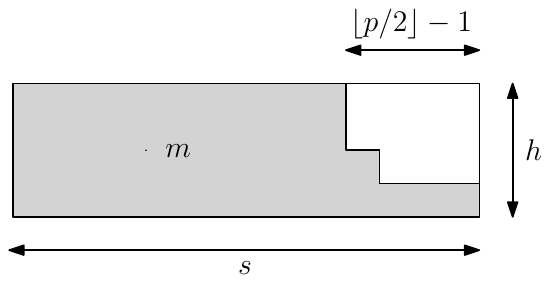}
\caption{Two items represented by two overlapping intervals are forced to be put in the same stack if the items/intervals are picked in the right way from the clique $C_p$.}
\label{fig:observation}
\end{figure}

\section{Conclusion}

We examine how much extra space it takes to avoid shifts in a stacking area. A simple online stacking algorithm is presented together with a sufficient condition guaranteeing no shifts if a specified amount of extra space is used in addition to the minimum amount needed. If the specified amount of extra space is lowered a little, then there will be stacking instances requiring shifts for any algorithm including offline algorithms (even those taking superpolynomial time). The online stacking algorithm is considered to be useful for the discrete version of the stacking problem with relatively few load/unload points. The paper also presents a procedure for coloring intervals with a minimum number of colors such that items can be stacked together with no shifts if their load/unload-intervals have the same color. This coloring procedure might also be a useful component for other stacking approaches.

There is still a gap between the extra space guaranteeing no shifts and the extra space allowing stacking problem instances with unavoidable shifts. It is an interesting problem to investigate whether this gap can be made smaller.

\bibliographystyle{plain}
\bibliography{referencer}

\end{document}